%% file: main.tex
\documentclass[11pt]{article}

\usepackage[T1]{fontenc}
\usepackage[utf8]{inputenc}
\usepackage[english]{babel}
\usepackage{lmodern}
\usepackage{microtype}
\usepackage[a4paper,margin=2.6cm]{geometry}
\usepackage{amsmath,amssymb,amsthm,mathtools}
\usepackage{booktabs}
\usepackage{array}
\usepackage{enumitem}
\usepackage{graphicx}
\usepackage{float}
\usepackage[hidelinks]{hyperref}

\hypersetup{
  pdftitle={The Structured Totient Preimage Problem: Reconstruction, Collisions, and Cryptographic Implications},
  pdfauthor={Luis Adrian Lizama-Perez},
  pdfsubject={A structured reconstruction problem in computational number theory and its cryptographic implications},
  pdfkeywords={Euler totient, computational number theory, exact enumeration,
    multiplicative reconstruction, collisions, preimage resistance, cryptography}
}

\newtheorem{definition}{Definition}
\newtheorem{proposition}{Proposition}
\newtheorem{corollary}{Corollary}
\newtheorem{observation}{Observation}
\newtheorem{assumption}{Candidate Assumption}

\newcommand{\Pset}{\mathcal{P}}
\newcommand{\Sset}{\mathcal{S}}

\title{The Structured Totient Preimage Problem:\\
Reconstruction, Collisions, and Cryptographic Implications}
\author{Luis Adri\'an Lizama-P\'erez\\
\small Universidad Aut\'onoma Metropolitana, Unidad Lerma, Mexico\\
\small \href{mailto:l.lizama@correo.ler.uam.mx}{l.lizama@correo.ler.uam.mx}\\
\small \href{https://orcid.org/0000-0001-5109-2927}{ORCID: 0000-0001-5109-2927}}
\date{}

\begin{document}
\maketitle

\begin{abstract}
We define and study the Structured Totient Preimage (STP) problem as a restricted
reconstruction relation with a direct cryptographic motivation.  Let
$p_1,\ldots,p_k$ be distinct primes of the same bit length and reveal only
\[
  x=\prod_{i=1}^{k}(p_i-1).
\]
Given $(x,\lambda,k)$, STP asks for any set of $k$ distinct $\lambda$-bit
primes satisfying this product.  The relation is efficiently verifiable, but
its reconstruction complexity is not known.  We establish three concrete
results.  First, for factored $x$ we derive the exact number of ordered
exponent allocations and a bound showing that direct reconstruction is
polynomial for fixed $k$ when $\Omega(x)=O(\log\lambda)$; this rules out that
regime as a basis for a strong hardness claim.  Second, we give exhaustive
algorithms for reconstruction and
collision analysis.  Third, we exhaustively evaluate 28 parameter pairs, with
$2\leq k\leq5$, up to $\lambda=16$ for pairs and $4{,}588{,}935$ prime sets in
the largest census.  The data quantify non-injectivity through collision
participation, maximum multiplicity, and conditional ambiguity in bits.
These results isolate STP from general inverse-totient computation and motivate
a Structured Totient Preimage Assumption for explicitly growing parameter
families.  Under such an assumption, STP becomes a candidate
preimage-resistant relation whose implications for commitments, proofs of
knowledge of multiplicative witnesses, and authentication can be stated
precisely.  The paper establishes the computational
foundation and parameter constraints for those constructions; it does not
claim a security reduction or post-quantum hardness.
\end{abstract}

\noindent\textbf{Keywords:} Euler's totient function; computational number
theory; exact enumeration; shifted primes; multiplicative reconstruction;
collision statistics; preimage resistance; cryptographic implications.

\medskip
\noindent\textbf{2020 Mathematics Subject Classification:}
11Y16 (primary); 11A25, 94A60 (secondary).

\section{Introduction}

Euler's totient function is easy to evaluate from a prime factorization.  If
$n=\prod_j q_j^{\alpha_j}$, then
\[
  \varphi(n)=\prod_j q_j^{\alpha_j-1}(q_j-1).
\]
Its inverse behavior is more complicated: some integers are not totients, the
function is not injective, and the number of inverse images varies widely
\cite{Carmichael1907,Lehmer1932,Ford1998,Ford1999}.  Consequently, recognizing
a totient, finding one inverse, and listing every inverse are different tasks.

This paper isolates a narrower relation.  Distinct primes $p_i$ are sampled
from one dyadic interval, while multiplication removes the boundaries between
the shifted factors $p_i-1$.  Reconstruction then means recovering any
partition of the multiplicative resources into blocks $d_i$ for which
$d_i+1$ are distinct primes of the prescribed size.  This restriction is not a
minor variant of asking for every solution of $\varphi(n)=x$: it fixes the
number, size, distinctness, and square-free structure of the hidden primes.

The motivation is both computational and cryptographic.  Prospective
commitment, witness-verification, and authentication mechanisms may use a
valid prime set as a secret multiplicative witness and its shifted product as
public data.  Before such protocols can be assessed, the underlying relation
must be defined independently: its witnesses, collisions, input models,
elementary algorithms, and plausible parameter regimes must be known.
Establishing that foundation is the purpose of this paper.

Four research questions organize the work:
\begin{enumerate}[leftmargin=*,itemsep=2pt,label=\textbf{RQ\arabic*.}]
  \item What is the precise search relation induced by products of shifted
        equal-bit primes?
  \item What can be proved about reconstruction when the factorization of the
        public product is supplied?
  \item How frequent and how large are collisions over exactly enumerable
        parameter ranges?
  \item Which results support, constrain, or rule out the use of the relation
        as a cryptographic hardness candidate?
\end{enumerate}

The principal contributions are:
\begin{enumerate}[leftmargin=*,itemsep=2pt]
  \item the formulation of STP as a standalone, efficiently verifiable search
        relation, with factored and unfactored input models;
  \item an exact exponent-allocation count, a reconstruction procedure, and a
        proved parameter regime in which factored-input candidate enumeration
        is polynomial;
  \item an exhaustive collision census for 28 parameter pairs, including
        explicit unique and non-unique reconstructions;
  \item an entropy-based ambiguity statistic that measures how much of a
        uniformly sampled witness remains undetermined after $x$ is revealed;
        and
  \item a candidate hardness assumption and a precise boundary between the
        results proved here and the separate cryptographic constructions they
        are intended to support.
\end{enumerate}

The contribution is therefore not a new general inverse-totient algorithm.
It is the isolation, exact finite analysis, and cryptographic delimitation of a
restricted multiplicative witness relation.  No unconditional NP-hardness,
average-case hardness, security reduction, or post-quantum claim is made.
To the best of our knowledge, prior inverse-totient work has not isolated this
joint equal-bit, distinct-prime, fixed-cardinality restriction or reported the
corresponding fiber collision census.

\section{The structured totient preimage problem}

Table~\ref{tab:notation} collects the notation used throughout the paper.  In
particular, the hidden primes $p_i$, the shifted factors $d_i=p_i-1$, and the
prime factors $q_j$ of the public value $x$ are different objects.

\begin{table}[H]
\centering
\caption{Notation used throughout the manuscript.}
\label{tab:notation}
\small
\renewcommand{\arraystretch}{1.08}
\begin{tabular}{@{}>{$}l<{$}
                >{\raggedright\arraybackslash}p{0.295\textwidth}
                >{$}l<{$}
                >{\raggedright\arraybackslash}p{0.295\textwidth}@{}}
\toprule
\multicolumn{1}{l}{Symbol} & Meaning &
\multicolumn{1}{l}{Symbol} & Meaning \\
\midrule
\lambda & Bit length of every hidden prime &
k & Number of distinct hidden primes \\
\Pset_\lambda & Set of exactly $\lambda$-bit primes &
\Sset_{\lambda,k} & Set of admissible $k$-element witnesses \\
P=\{p_1,\ldots,p_k\} & Hidden prime-set witness &
d_i=p_i-1 & Shifted factor associated with $p_i$ \\
x & Public shifted product &
q_j^{e_j} & Prime-power factors in the factorization of $x$ \\
a_{ji} & Share of exponent $e_j$ assigned to $d_i$ &
T_{\lambda,k} & Number of admissible prime sets \\
f_{\lambda,k} & Shifted-product map &
\mu_{\lambda,k}(x) & Number of witnesses in the fiber over $x$ \\
N_\lambda & Number of $\lambda$-bit primes &
V_{\lambda,k} & Number of represented public values \\
A_k(x) & Number of ordered exponent allocations &
\Omega(x) & Prime factors of $x$, counted with multiplicity \\
\mathbf{P},\mathbf{X} & Random witness and its public image &
\Delta_{\lambda,k},\ C_{\lambda,k},\ \overline{\mu}_{\lambda,k} & Ambiguity, collision participation, and mean fiber size \\
\bottomrule
\end{tabular}
\end{table}

\subsection{Relation and input models}

For an integer $\lambda\geq2$, define the set of exactly $\lambda$-bit primes
\[
  \Pset_\lambda=
  \{p\text{ prime}:2^{\lambda-1}\leq p<2^\lambda\}.
\]
For $1\leq k\leq |\Pset_\lambda|$, let
\[
  \Sset_{\lambda,k}=
  \{P\subseteq\Pset_\lambda:|P|=k\},
  \qquad
  T_{\lambda,k}=|\Sset_{\lambda,k}|=
  \binom{|\Pset_\lambda|}{k}.
\]
Define the shifted-product map
\[
  f_{\lambda,k}(P)=\prod_{p\in P}(p-1)
\]
and the multiplicity of a represented value $x$ by
\[
  \mu_{\lambda,k}(x)=
  |\{P\in\Sset_{\lambda,k}:f_{\lambda,k}(P)=x\}|.
\]

\begin{definition}[Structured totient preimage]
For given $x$, $\lambda$, and $k$, a structured preimage is a set
$P\in\Sset_{\lambda,k}$ such that $f_{\lambda,k}(P)=x$.
\end{definition}

The search task is to return any such set under the promise that one exists.
The enumeration task is to return all such sets.  These tasks are relations:
if $\mu_{\lambda,k}(x)>1$, recovering the set that originally generated $x$
is not required and in general cannot be inferred from $x$ alone.

Two input models must be distinguished.

\begin{description}[leftmargin=2.8cm,style=nextline]
  \item[Factored input.] The complete factorization
  $x=\prod_{j=1}^{r}q_j^{e_j}$ is supplied.
  \item[Unfactored input.] Only $x$, $\lambda$, and $k$ are supplied.
  A factor-first method must also factor $x$.
\end{description}

Every structured preimage $P=\{p_1,\ldots,p_k\}$ determines the square-free
integer $n=\prod_i p_i$ and satisfies $\varphi(n)=x$.  The converse does not
hold for an arbitrary inverse totient because inverse images may contain prime
powers, primes of different sizes, or a different number of prime factors.

\subsection{A different public--witness split}

The familiar RSA relation publishes $n=pq$ and keeps its factorization hidden;
$\varphi(n)$ is normally not public.  STP uses a different split.  The public
value is
\[
  x=\varphi\!\left(\prod_{p\in P}p\right)
   =\prod_{p\in P}(p-1),
\]
while the witness is a prime set $P$ satisfying explicit cardinality and
bit-length restrictions.  In the stronger factored-input model, even the prime
factorization of $x$ may be public; the remaining task is to reconstruct valid
boundaries among its prime-power resources.

The witness relation is efficiently verifiable.  Given $P$, a verifier checks
its cardinality, distinctness, bit lengths, primality, and shifted product.
This combination of easy verification and potentially difficult
reconstruction is the feature needed by the motivating cryptographic
constructions.  It is a candidate feature, not yet a hardness theorem.

\subsection{Candidate hardness statement}

Let $k=k(\lambda)$ be a polynomially bounded growing function satisfying
$2\leq k(\lambda)\leq|\Pset_\lambda|$.  On input $1^\lambda$, let
$\mathsf{GenSTP}$ sample $P$ uniformly from $\Sset_{\lambda,k(\lambda)}$ and
output $(x,\lambda,k(\lambda))$, where $x=f_{\lambda,k(\lambda)}(P)$.  The
following experiment records the exact assumption that a cryptographic
construction based on STP would need.

\begin{assumption}[Structured Totient Preimage Assumption (STPA)]
For every probabilistic polynomial-time classical algorithm $\mathcal{A}$,
\[
 \Pr\!\left[
   \begin{aligned}
   P'\in f_{\lambda,k(\lambda)}^{-1}(x)\;:\quad
   &P\mathrel{\xleftarrow{\$}}\Sset_{\lambda,k(\lambda)},\quad
     x=f_{\lambda,k(\lambda)}(P),\\[-1mm]
   &P'\leftarrow\mathcal{A}(1^\lambda,k(\lambda),x)
   \end{aligned}
 \right]
 =\operatorname{negl}(\lambda),
\]
where the probability is over the choices of $P$ and the internal randomness
of $\mathcal{A}$.
\end{assumption}

The \emph{factored STPA} gives the algorithm the complete factorization of
$x$ in canonical form as additional input.  It is the stronger candidate
assumption and is the version required by constructions that intentionally
disclose this factorization.  The results below do not prove either
assumption.  They do, however, identify a factored-input regime with only
polynomially many candidates; that regime cannot support factored STPA and
must be excluded from cryptographic parameter selection.

Table~\ref{tab:scope} separates established results from intended downstream
use.

\begin{table}[H]
\centering
\caption{Logical scope of the manuscript.}
\label{tab:scope}
\small
\begin{tabular}{>{\raggedright\arraybackslash}p{0.31\textwidth}
                >{\raggedright\arraybackslash}p{0.19\textwidth}
                >{\raggedright\arraybackslash}p{0.40\textwidth}}
\toprule
Component & Status & Role \\
\midrule
STP relation and input models & Defined here & Public statement and secret witness \\
Verification and allocation count & Proved here & Correctness and parameter screening \\
Collision and ambiguity data & Exact counts; rounded entropy & Quantifies non-injectivity \\
STPA and factored STPA & Candidate assumptions & Possible preimage-resistant foundation \\
Commitment, proof-of-knowledge, and authentication protocols & Separate constructions & Require their own security games and reductions \\
\bottomrule
\end{tabular}
\end{table}

\section{Main analytical and computational results}

\subsection{Forward collision census}

For fixed $(\lambda,k)$, the collision census is computed as follows:
\begin{enumerate}[leftmargin=*,itemsep=2pt]
  \item generate every prime in $\Pset_\lambda$ with a sieve;
  \item enumerate every unordered $k$-subset $P$;
  \item compute $x=f_{\lambda,k}(P)$; and
  \item increment the multiplicity stored for $x$.
\end{enumerate}
The result is the complete finite map $x\mapsto\mu_{\lambda,k}(x)$.  It is not
a sample.  If $N_\lambda=|\Pset_\lambda|$ and $V_{\lambda,k}$ denotes the
number of represented values, the implementation performs
$\binom{N_\lambda}{k}$ tuple evaluations and stores $V_{\lambda,k}$ integer
keys.  Its direct time cost is therefore
\[
  O\!\left(k\binom{N_\lambda}{k}\right)
\]
arithmetic multiplications, in addition to the sieve, and its dictionary
storage is $O(V_{\lambda,k})$.

\subsection{Reconstruction from a factorization}

Suppose that
\[
  x=\prod_{j=1}^{r}q_j^{e_j}
\]
is supplied in factored form.  An ordered factorization
$x=d_1\cdots d_k$ is determined by nonnegative integers $a_{ji}$ satisfying
\[
  e_j=a_{j1}+\cdots+a_{jk},
  \qquad
  d_i=\prod_{j=1}^{r}q_j^{a_{ji}}.
\]
The candidate primes are $p_i=d_i+1$.

\begin{proposition}[Exact allocation count]
\label{prop:allocation-count}
Before bit-length, primality, ordering, and distinctness tests, the number of
ordered exponent allocations is
\[
  A_k(x)=\prod_{j=1}^{r}\binom{e_j+k-1}{k-1}.
\]
\end{proposition}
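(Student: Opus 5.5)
The count follows by combining a bijection between ordered factorizations and exponent matrices with a stars-and-bars count, so the argument splits into two steps. By an ordered exponent allocation I mean a $r\times k$ array $(a_{ji})$ of nonnegative integers with row sums $a_{j1}+\cdots+a_{jk}=e_j$ for every $j$. It is exactly these arrays that the displayed formula counts.

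\textbf{Bijection with ordered factorizations.} Each such array determines $d_i=\prod_j q_j^{a_{ji}}$, and the row-sum condition gives $d_1\cdots d_k=x$. Conversely, let $(d_1,\ldots,d_k)$ be an ordered tuple of positive integers with product $x$. Each $d_i$ divides $x$, so its prime factors lie among $q_1,\ldots,q_r$. By uniqueness of factorization, $d_i$ has a unique exponent vector $(a_{1i},\ldots,a_{ri})$. Comparing exponents of $q_j$ in $\prod_i d_i=x$ forces $\sum_i a_{ji}=e_j$. Hence the map from arrays to ordered factorizations is a bijection. The count is therefore independent of whether one counts arrays or ordered tuples $(d_1,\ldots,d_k)$. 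This is also why no bit-length, primality or distinctness test has been applied yet.

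\textbf{Counting the arrays.} The rows of the array are constrained independently: row $j$ only has to sum to $e_j$. So the number of arrays is the product over $j$ of the number of row vectors. For a single $j$, the number of $k$-tuples of nonnegative integers summing to $e_j$ is the weak-composition count $\binom{e_j+k-1}{k-1}$. This follows from stars and bars: the tuples correspond to arrangements of $e_j$ stars and $k-1$ bars, and such an arrangement is fixed by choosing the bar positions among $e_j+k-1$ slots. Multiplying over $j=1,\ldots,r$ gives $A_k(x)$. For $x=1$ (so $r=0$) the empty product gives $1$, which is consistent.

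The only step that needs any care is the converse direction of the bijection. There one must invoke unique factorization to see that distinct arrays give distinct ordered tuples, and that every ordered factorization arises from some array. Once that is in place, the product formula is routine.
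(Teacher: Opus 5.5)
Your proof is correct and follows essentially the paper's route: count the weak compositions of each exponent $e_j$ into $k$ parts by stars and bars, then multiply over the independent prime bases. Your additional unique-factorization argument that exponent arrays correspond bijectively to ordered factorizations $x=d_1\cdots d_k$ is sound. The paper takes that correspondence as given in the setup preceding the proposition, so this step adds rigor but does not change the approach.
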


\begin{proof}
For a fixed prime base $q_j$, the exponent $e_j$ has
$\binom{e_j+k-1}{k-1}$ weak allocations among $k$ positions.  Allocations for
different prime bases are independent, so their counts multiply.
\end{proof}

This is a count of candidates before rejection, not a running-time lower
bound.  Requiring $d_1<\cdots<d_k$, applying the dyadic bounds before primality
testing, and abandoning partial products that are already too large can reduce
the actual search.

\begin{proposition}[A simple allocation bound]
\label{prop:allocation-bound}
Let $\Omega(x)=\sum_{j=1}^{r}e_j$ count prime factors with multiplicity.  Then
\[
  A_k(x)\leq k^{\Omega(x)}.
\]
\end{proposition}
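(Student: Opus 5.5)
By Proposition~\ref{prop:allocation-count}, it suffices to show, for each prime base $q_j$ separately, that
\[
  \binom{e_j+k-1}{k-1}\leq k^{e_j}.
\]
Taking the product over $j=1,\ldots,r$ then gives $A_k(x)\leq\prod_j k^{e_j}=k^{\sum_j e_j}=k^{\Omega(x)}$.

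For the per-prime inequality I will use an injection argument rather than estimating the binomial coefficient directly. The binomial coefficient counts weak compositions $(a_{j1},\ldots,a_{jk})$ of $e_j$ into $k$ nonnegative parts, that is, multisets of size $e_j$ drawn from the $k$ positions. The quantity $k^{e_j}$ counts functions from the $e_j$ individual copies of $q_j$ to the $k$ positions, that is, ordered sequences of length $e_j$ over $\{1,\ldots,k\}$. The map from sequences to multisets records how many copies go to each position. It is surjective, since any composition $(a_{j1},\ldots,a_{jk})$ is attained by the nondecreasing sequence listing position $1$ exactly $a_{j1}$ times, then position $2$ exactly $a_{j2}$ times, and so on. Hence the number of compositions is at most the number of sequences.

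The edge cases need a check. When $e_j=0$ both sides equal $1$. When $k=1$ both sides equal $1$ for every $e_j$. The ordering, bit-length and primality filters are irrelevant here, because the bound concerns the raw allocation count. I do not expect a real obstacle. The only delicate point is stating the surjection cleanly, so I will fix the canonical sorted preimage explicitly.
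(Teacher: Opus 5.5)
Your proof is correct and follows the paper's argument essentially verbatim. You reduce via Proposition~\ref{prop:allocation-count} to the per-base inequality $\binom{e+k-1}{k-1}\leq k^{e}$, show it by noting that each of the $k^{e}$ assignments of labelled copies to positions maps onto a weak composition and every composition is attained, and then multiply over the bases; your explicit sorted preimage and edge-case checks simply spell out details the paper leaves implicit.
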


\begin{proof}
For one exponent $e$, every weak allocation into $k$ positions is realized by
at least one assignment of $e$ labelled copies to those positions.  There are
$k^e$ such assignments, so
$\binom{e+k-1}{k-1}\leq k^e$.  Multiplication over the prime bases gives the
claim.
\end{proof}

\begin{corollary}[A tractable factored regime]
\label{cor:tractable}
For fixed $k$, if $\Omega(x)=O(\log\lambda)$, factored-input STP search is
solvable in time polynomial in $\lambda$ by direct allocation and testing.
\end{corollary}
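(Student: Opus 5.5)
The plan is to combine Proposition~\ref{prop:allocation-bound} with a per-candidate cost bound.

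First, bound the number of candidates. If $\Omega(x)\leq c\log_2\lambda$ for some constant $c$ and all large $\lambda$, then Proposition~\ref{prop:allocation-bound} gives
\[
A_k(x)\leq k^{\Omega(x)}\leq k^{c\log_2\lambda}=\lambda^{c\log_2 k}.
\]
Since $k$ is fixed, this is a polynomial in $\lambda$ of fixed degree $c\log_2 k$.

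Second, make the enumeration explicit. The algorithm iterates, for each prime base $q_j$, over all weak compositions $(a_{j1},\ldots,a_{jk})$ of $e_j$, and takes the Cartesian product over $j$. This can be done with a standard composition generator at polynomial overhead per item. For each ordered allocation it then does the following:
\begin{itemize}
\item forms $d_i=\prod_j q_j^{a_{ji}}$;
\item checks $d_1<\cdots<d_k$, which handles both ordering and distinctness;
\item checks $2^{\lambda-1}\leq d_i+1<2^\lambda$;
\item tests primality of each $d_i+1$ with a deterministic polynomial-time test such as AKS.
\end{itemize}
It outputs the first set that passes. Correctness follows because every structured preimage $P$ corresponds to an ordered factorization of $x$ into $p_i-1$, and hence to some allocation. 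Under the promise that a preimage exists, the search therefore succeeds.

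Third, bound the cost per candidate and the input size. The main point to settle is that the arithmetic is polynomial in $\lambda$, not merely in $\log x$ with $x$ unconstrained. Under the promise, $x<2^{k\lambda}$, so all numbers involved have $O(k\lambda)$ bits, and $r\leq\Omega(x)$. Before multiplying, the algorithm can cap the $d_i$: any partial product exceeding $2^\lambda$ is rejected immediately, so intermediate sizes stay $O(\lambda)$ bits. Each candidate then costs $\mathrm{poly}(\lambda)$ operations: $O(\Omega(x)\cdot k)$ multiplications of $O(\lambda)$-bit numbers, plus $k$ primality tests of polynomial cost. Multiplying by the candidate count from the first step gives total time $\lambda^{c\log_2 k}\cdot\mathrm{poly}(\lambda)$, which is polynomial for fixed $k$.

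The only mild obstacle is being careful that the hidden constant in $O(\log\lambda)$ and the fixed $k$ together give a fixed exponent, and that the promise bounds $\log x$. If the promise fails, the algorithm may first reject any input with $\log_2 x\geq k\lambda$ in time linear in the input length.
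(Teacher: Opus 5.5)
Your proposal is correct and follows the paper's own proof. Both bound the number of allocations by $k^{\Omega(x)}\leq\lambda^{c\log k}$ via Proposition~\ref{prop:allocation-bound}, use $x<2^{k\lambda}$ to keep every candidate at $O(\lambda)$ bits, and invoke deterministic polynomial-time primality testing; your explicit enumeration, ordering check, pruning, and rejection of inputs that violate the promise are sound elaborations of the paper's one-line ``forming and testing every allocation'' step.
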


\begin{proof}
If $\Omega(x)\leq c\log\lambda$, then
$A_k(x)\leq k^{c\log\lambda}=\lambda^{c\log k}$, which is polynomial in
$\lambda$ for fixed $k$.  Moreover, every valid instance satisfies
$x<2^{k\lambda}$, so its candidates have $O(k\lambda)=O(\lambda)$ bits in
this regime.  Forming and testing every allocation therefore takes polynomial
time; deterministic primality testing is polynomial-time \cite{AKS2004}.
\end{proof}

This corollary is also a cryptographic constraint: a factored-input
construction cannot base its security on STP in this fixed-$k$ regime.

The reference reconstruction program implements this specialized direct procedure:
generate all weak exponent allocations, form the $d_i$, sort to remove
permutations, and retain a tuple only when every $d_i+1$ is a distinct
$\lambda$-bit prime.  For $k=2$, an equivalent procedure enumerates divisors
$d\mid x$ and tests the pair
\[
  d+1,\qquad x/d+1.
\]

\begin{observation}[Factor exposure]
If $\lambda\geq3$, $k\geq2$, and a structured preimage is returned, then
$p-1$ is a proper nontrivial divisor of $x$ for every returned prime $p$.
\end{observation}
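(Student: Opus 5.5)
The plan is to unpack the definition of a structured preimage and verify the three conditions ``divisor'', ``nontrivial'' (i.e.\ $p-1>1$), and ``proper'' (i.e.\ $p-1<x$) one at a time. Let $P=\{p_1,\ldots,p_k\}\in\Sset_{\lambda,k}$ be a returned preimage, so that $x=\prod_{i=1}^k(p_i-1)$ by the definition of $f_{\lambda,k}$. Fix an index $i$ and write $x=(p_i-1)\cdot m_i$ with $m_i=\prod_{l\neq i}(p_l-1)$. Since every $p_l-1$ is a positive integer, $m_i$ is a positive integer, and so $p_i-1\mid x$ at once.

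For nontriviality, use the bit-length restriction. Every $p\in\Pset_\lambda$ satisfies $p\geq 2^{\lambda-1}$. With $\lambda\geq3$ this gives $p\geq4$, hence $p-1\geq3>1$. This is the only place where the hypothesis $\lambda\geq3$ is needed. For $\lambda=2$ we would have $\Pset_2=\{2,3\}$, and the prime $2$ gives $p-1=1$, which is exactly the trivial divisor we want to exclude.

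For properness, we need $m_i>1$. Because $k\geq2$, the product defining $m_i$ has at least one factor $p_l-1$ with $l\neq i$. By the previous step that factor is at least $3$, so $m_i\geq3$. Therefore $p_i-1=x/m_i\leq x/3<x$.

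There is no genuine obstacle here. The only step needing care is to make sure properness uses \emph{another} prime of the witness, so that both $k\geq2$ and $\lambda\geq3$ are applied to a factor other than $p_i-1$; distinctness of the $p_l$ is not even required. I would close with a remark on the consequence: $x$ always exposes each $p-1$ as a divisor, so knowing any single hidden prime reduces the instance to one with parameters $(x/(p-1),\lambda,k-1)$.
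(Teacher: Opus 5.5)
Your proof is correct. It is exactly the direct verification the paper leaves implicit, since the observation is stated there without proof: divisibility from $x=(p_i-1)m_i$, nontriviality from $p\geq 2^{\lambda-1}\geq 4$, and properness from a second factor $p_l-1\geq 3$, which $k\geq 2$ guarantees.

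Your closing remark is not part of the proof, but it needs a fix: the reduced instance must also forbid $p$ itself. For example, take $P=\{131,151,181\}$ at $\lambda=8$ and suppose you know $151$. Then $x/150=23400$, and this value also has the preimage $\{151,157\}$, which does not extend to a valid $3$-set.
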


This observation does not reduce the factorization of an arbitrary integer to
structured reconstruction.  It only records that a successful reconstruction
on a valid instance reveals a factor of that particular shifted product.

\section{Finite ambiguity measures}

Collision counts distinguish represented values, but they do not by themselves
describe how a uniformly generated input is affected.  Let the random set
$\mathbf{P}$ be uniform on $\Sset_{\lambda,k}$ and let
$\mathbf{X}=f_{\lambda,k}(\mathbf{P})$.  Conditional on $\mathbf{X}=x$, every
set in the fiber is equally likely.  Consequently,
\begin{equation}
  \Delta_{\lambda,k}
  :=H(\mathbf{P}\mid\mathbf{X})
  =\frac{1}{T_{\lambda,k}}
    \sum_x \mu_{\lambda,k}(x)\log_2\mu_{\lambda,k}(x),
  \label{eq:ambiguity}
\end{equation}
where the sum is over represented values.  This is the standard conditional
entropy identity for a deterministic map and a uniform input
\cite{Shannon1948}.  It is reported here only as a finite ambiguity measure;
it is not a security parameter.

Two additional exact summaries are used:
\begin{align*}
  C_{\lambda,k}
    &=\frac{1}{T_{\lambda,k}}
      \sum_{x:\mu_{\lambda,k}(x)\geq2}\mu_{\lambda,k}(x),
      &&\text{fraction of input sets in a non-singleton fiber},\\
  \overline{\mu}_{\lambda,k}
    &=\frac{T_{\lambda,k}}{V_{\lambda,k}},
      &&\text{mean fiber size over represented values}.
\end{align*}
These quantities answer different questions.  The maximum multiplicity records
the largest observed fiber, $C_{\lambda,k}$ measures how often a uniformly
sampled input is ambiguous, and $\Delta_{\lambda,k}$ weights each fiber by the
number of bits needed to identify an input within it.
All multiplicities and counts reported below are exact.  Decimal evaluations
of $\Delta_{\lambda,k}$ and derived ratios are computed from those exact
multiplicities and rounded to the displayed precision.

\section{Exact experiments}

\subsection{Reproducibility and range}

The program uses only the Python standard library.  It enumerates combinations
in lexicographic order, uses arbitrary-precision integer products, and stores
exact multiplicities.  Running the script rewrites both the complete CSV file
and the LaTeX table used below.  Repeated execution produced byte-identical
data and table files; entropy values are deterministic numerical evaluations
from the exact fibers and are rounded when displayed.

The census contains 28 parameter pairs:
\[
\begin{array}{c|c}
  k & \lambda\\ \hline
  2 & 4,\ldots,16\\
  3 & 6,\ldots,12\\
  4 & 6,\ldots,10\\
  5 & 7,\ldots,9.
\end{array}
\]
The largest row contains $4{,}588{,}935$ unordered prime pairs.  The complete
CSV includes the number of primes, input sets, represented values, collision
values, colliding input sets, mean fiber size, ambiguity, and maximum
multiplicity for every row.

\input{tables/collision_statistics.tex}

\subsection{Observed finite-range behavior}

Three patterns are visible in Table~\ref{tab:collision-statistics}.

First, structured collisions are not exceptional artifacts.  For example,
the fraction of colliding pairs is between $5.90\%$ and $6.66\%$ for
$12\leq\lambda\leq16$.  Over the same five rows, the conditional ambiguity is
between $0.06275$ and $0.07113$ bits.  These close finite values are empirical
observations, not an assertion of convergence.

Second, increasing $k$ can produce substantially larger fibers in the tested
ranges.  At $(\lambda,k)=(10,4)$, $33.767\%$ of the input sets belong to a
non-singleton fiber and the largest fiber contains 24 sets.  At $(9,5)$ the
corresponding values are $30.545\%$ and 17.  Comparisons across different $k$
should remain finite-range comparisons because the number of available primes
and the number of subsets change simultaneously.

Third, the rates are not monotone at small parameters.  For example, the
colliding fraction for $k=3$ drops from $12.238\%$ at $\lambda=7$ to
$6.776\%$ at $\lambda=8$, then increases again.  This behavior is one reason
not to infer an asymptotic law from a short table.

\subsection{Explicit reconstructions}

The following examples are independently checked by the reconstruction
program.

\paragraph{Unique pair.}
For $x=1440=2^5\,3^2\,5$ and $(\lambda,k)=(6,2)$, the 36 ordered exponent
allocations produce exactly one unordered structured preimage:
\[
  \{37,41\},\qquad (37-1)(41-1)=36\cdot40=1440.
\]

\paragraph{Collision.}
For $x=23400=2^3\,3^2\,5^2\,13$ and $(\lambda,k)=(8,2)$, the 72 ordered
allocations produce two structured preimages:
\[
  (131-1)(181-1)=130\cdot180=23400,
\]
\[
  (151-1)(157-1)=150\cdot156=23400.
\]
Thus a solver can reconstruct a valid pair without identifying which pair was
used to generate the target.

\paragraph{Permutation is not multiplicity.}
For $P=\{37,41,43\}$,
\[
  x=36\cdot40\cdot42=60480.
\]
The six orderings describe one set, not six preimages.  Exact enumeration finds
this set unique for $(\lambda,k)=(6,3)$.

\section{Relation to inverse-totient complexity}

The general inverse-totient problem has a deeper literature than the restricted
relation considered here.  When the factorization of the target is supplied,
Contini, Croot, and Shparlinski give a deterministic algorithm that constructs
the full inverse set in time polynomial in the number of inverse-search paths,
the divisor count, and the input length; they also obtain polynomial time for
almost all targets.  Separately, under a strong quantitative prime-pair
conjecture, they prove NP-completeness for deciding whether a finite set of
integers contains a totient \cite{Contini2006}.  Alekseyev subsequently gives a
general dynamic-programming framework for computing inverses, counts, power
sums, and extrema of multiplicative functions, with Euler's totient as a
principal example \cite{Alekseyev2016}.  The exponent-allocation procedure in
this paper should therefore be read as a transparent specialization to
square-free, equal-bit, fixed-cardinality preimages, not as a replacement for
general inverse-totient algorithms.  The isolation of the STP relation, the
restricted collision census, and its ambiguity measures are the present
paper's focus.  These general results provide important context, but they do
not directly establish hardness for a single target constrained to $k$
distinct equal-bit primes.

There is also a natural resemblance to Product Partition, which is strongly
NP-complete when the input items are indivisible \cite{Ng2010}.  In a factored
structured-preimage instance, however, prime exponents can be split among
candidate factors, and every completed factor must be one less than a prime.
The resemblance therefore does not by itself give a reduction.

The computational conclusions of this paper require neither result.  The
finite census is exact, the reconstruction algorithm is exhaustive for a
supplied factorization, and the candidate count follows directly from weak
compositions.  Whether the restricted search relation is hard in a growing
parameter regime remains open.

\section{Cryptographic consequences and boundaries}

\subsection{Candidate preimage-resistant relation}

The map $P\mapsto f_{\lambda,k}(P)$ is efficiently computable, and membership
in a fiber is efficiently verifiable.  If STPA holds for a specified growing
parameter family, this map induces a candidate preimage-resistant
\emph{relation}: the goal of an adversary is to find any valid preimage, not
necessarily the sampled one.  This distinction is essential because the
experiments prove that structured collisions occur.  Non-injectivity is
compatible with a preimage-resistant relation, but it changes how downstream
protocols must define correctness and security.

\subsection{Consequences for the motivating constructions}

The present results have direct implications for the protocols that motivated
STP.

\begin{description}[leftmargin=3.6cm,style=nextline]
  \item[Commitment mechanisms.]
  Whenever $\mu_{\lambda,k}(x)>1$, the rule ``commit with $x$ and open with any
  $P$'' is not perfectly binding: more than one valid opening exists.  This
  fact alone does not rule out computational binding, which would require a
  separate game and an analysis of whether an efficient committer can find two
  openings.  A viable construction must either enforce a uniqueness condition,
  bind auxiliary information to the sampled witness, or prove an appropriate
  computational-binding property.  The collision census therefore identifies
  a design constraint, not merely a numerical curiosity.

  \item[Proofs of knowledge.]
  For polynomially bounded $k(\lambda)$, STP supplies the explicit NP relation
  \[
    \mathcal{R}_{\mathrm{STP}}=
    \{((x,\lambda,k),P):P\in\Sset_{\lambda,k},\ f_{\lambda,k}(P)=x\}.
  \]
  It can serve as the statement--witness layer of a proof or verification
  procedure.  A concrete proof system must still specify completeness,
  knowledge soundness, zero knowledge, and efficiency.

  \item[Authentication primitives.]
  A public value $x$ can index a secret structured witness $P$, while a
  challenge--response protocol proves possession without directly releasing
  it.  Freshness, replay resistance, witness leakage, and reduction to STPA
  belong to the separate protocol analysis.
\end{description}

These applications explain why the relation is established before the
protocols are presented.  The mathematical object and its parameter limits
should not depend on one particular construction, while every later
construction can cite the same relation and state exactly which version of
STPA it assumes.

\subsection{Factored inputs and the post-quantum boundary}

The two input models lead to different security interpretations.  In the
unfactored model, ordinary factorization may contribute substantially to the
observed difficulty.  In the factored model, only the allocation-and-primality
layer remains.  Corollary~\ref{cor:tractable} proves that this residual layer is
easy in an explicit fixed-$k$ regime, so cryptographic parameters must avoid
that regime.

Quantum factoring removes the first layer \cite{Shor1997}.  Consequently, a
post-quantum claim could rest only on factored STPA in a growing parameter
family, together with a separate quantum attack analysis.  Neither requirement
is established here.
This boundary prevents the computational evidence from being misread as a
post-quantum security result.

\subsection{Limitations and next tests}

The census is exact but finite and therefore proves no asymptotic collision
law.  Direct subset enumeration scales combinatorially and cannot reach
cryptographic bit lengths.  The data establish neither average-case nor
worst-case hardness, and no complete protocol or security game is included.

The next mathematical-computational tests are now specific: study growing
$k(\lambda)$, measure exponent-allocation tree width under
$\mathsf{GenSTP}_{\lambda,k}$, compare branch-and-bound and meet-in-the-middle
reconstruction, and search for reductions or attacks against factored STPA.
The downstream protocol papers can then use the surviving parameter families
and supply their own security definitions.

\section{Conclusion}

This paper introduced STP as a standalone computational relation: given the
product of shifted, distinct, equal-bit primes, recover any compatible prime
set.  The relation is narrower than general inverse-totient computation and
separates two possible sources of difficulty: factoring the public product and
reconstructing admissible shifted-prime boundaries.

The main results are identifiable and complementary.  The exact allocation
formula characterizes the direct factored-input search space.  Its upper bound
proves a polynomial reconstruction regime and thereby excludes weak
cryptographic parameter choices.  Exhaustive enumeration over 28 parameter
pairs establishes that the map is non-injective and quantifies its fibers by
collision participation, maximum multiplicity, and conditional ambiguity.
The supplied algorithms and complete data make every finite claim
reproducible.

These results do not prove STPA, but they give it a precise object, generator,
success condition, and parameter boundary.  They also show why collisions
matter operationally: a naive commitment is not perfectly binding, while
proof-of-knowledge and authentication constructions must treat STP as a
relation rather than as an injective function.  The contribution is therefore
the computational foundation needed before those protocols are developed and
evaluated separately.  Establishing or refuting factored STPA for growing
parameters is the decisive next step.

\section*{Data and code availability}
The source package accompanying this article contains both Python programs,
the complete CSV data, the generated LaTeX table, and compilation instructions.
The programs use only the Python standard library.

\section*{Acknowledgment}
The author thanks Dr. Luis Hern\'andez Encinas for comments on an earlier
version of this work.

\section*{Conflict of interest}
The author declares no conflict of interest.

\bibliographystyle{plain}
\bibliography{references}

\end{document}

%% file: tables/collision_statistics.tex
\begin{table}[H]
\centering
\caption{Selected collision statistics. All counts are exact. A tuple is counted as colliding when its value \(x\) is shared by at least one other unordered tuple. The ambiguity column gives \(H(\mathbf{P}\mid\mathbf{X})\) in bits for a uniformly sampled tuple, rounded to five decimal places.}
\label{tab:collision-statistics}
\small
\resizebox{\textwidth}{!}{%
\begin{tabular}{rrrrrrrr}
\toprule
\(k\) & \(\lambda\) & \(\#\mathcal{P}_{\lambda}\) & Tuples & Values & Colliding (\%) & Ambiguity & Max. mult. \\
\midrule
2 & 6 & 7 & 21 & 21 & 0.000 & 0.00000 & 1 \\
2 & 8 & 23 & 253 & 250 & 2.372 & 0.02372 & 2 \\
2 & 10 & 75 & 2,775 & 2,684 & 6.450 & 0.06640 & 3 \\
2 & 12 & 255 & 32,385 & 31,266 & 6.664 & 0.07113 & 5 \\
2 & 14 & 872 & 379,756 & 367,181 & 6.363 & 0.06833 & 6 \\
2 & 16 & 3,030 & 4,588,935 & 4,433,899 & 6.435 & 0.07027 & 8 \\
3 & 6 & 7 & 35 & 35 & 0.000 & 0.00000 & 1 \\
3 & 8 & 23 & 1,771 & 1,710 & 6.776 & 0.06974 & 3 \\
3 & 10 & 75 & 67,525 & 60,555 & 18.713 & 0.22382 & 9 \\
3 & 12 & 255 & 2,731,135 & 2,423,283 & 19.577 & 0.25654 & 16 \\
4 & 8 & 23 & 8,855 & 8,272 & 12.660 & 0.13551 & 3 \\
4 & 10 & 75 & 1,215,450 & 967,849 & 33.767 & 0.48524 & 24 \\
5 & 8 & 23 & 33,649 & 30,175 & 19.344 & 0.21669 & 6 \\
5 & 9 & 43 & 962,598 & 787,667 & 30.545 & 0.42589 & 17 \\
\bottomrule
\end{tabular}
}
\end{table}